\begin{document}
\newtheorem{remark}{\bf~~Remark}
\newtheorem{proposition}{\bf~~Proposition}
\newtheorem{lemma}{\bf~~Lemma}
\newtheorem{theorem}{\bf~~Theorem}
\newtheorem{definition}{\bf~~Definition}
\title{Reconfigurable Refractive Surfaces: An Energy-Efficient Way to Holographic MIMO}

\author{
\IEEEauthorblockN{
	{Shuhao Zeng}, \IEEEmembership{Student Member, IEEE},
	{Hongliang Zhang}, \IEEEmembership{Member, IEEE},
	{Boya Di}, \IEEEmembership{Member, IEEE},\\
	{Haichao Qin}, {Xin Su}, {and Lingyang Song}, \IEEEmembership{Fellow, IEEE}}
	\vspace{-0.9cm}
	\thanks{S. Zeng, B. Di, and L. Song are with Department of Electronics, Peking University, Beijing, China (email: \{shuhao.zeng, boya.di, lingyang.song\}@pku.edu.cn).}
	\thanks{H. Zhang is with Department of Electrical and Computer Engineering, Princeton University, USA (email: hz16@princeton.edu).}
	\thanks{H. Qin and X. Su are with CICT Mobile Communication Technology Co., Ltd. (email: \{qinhaichao,suxin\}@cictmobile.cn).}
}


\maketitle

\begin{abstract}

Holographic Multiple Input Multiple Output~(HMIMO), which integrates massive antenna elements into a compact space to achieve a spatially continuous aperture, plays an important role in future wireless networks. With numerous antenna elements, it is hard to implement the HMIMO via phased arrays due to unacceptable power consumption. To address this issue, reconfigurable refractive surfaces~(RRS) is an energy efficient enabler of HMIMO since the surface is free of expensive phase shifters. Unlike traditional metasurfaces working as passive relays, the RRS is used as transmit antennas, where the far-field approximation does not hold anymore, urging a new performance analysis framework. In this letter, we first derive the data rate of an RRS-based single-user downlink system, and then compare its power consumption with the phased array. Simulation results verify our analysis and show that the RRS is an energy-efficient way to HMIMO. 
\end{abstract}
\vspace{-0.2cm}
\begin{IEEEkeywords}
Reconfigurable refractive surfaces, holographic MIMO, energy efficient.
\end{IEEEkeywords}
\vspace{-0.6cm}
\section{Introduction}
\vspace{-0.1cm}
 One of the key enablers of future wireless networks is Holographic Multiple Input Multiple Output~(HMIMO), where a large number of tiny antennas or reconfigurable elements are integrated into a compact space to achieve a spatially continuous aperture~\cite{ATL_2020}. {Compared with conventional massive MIMO, the HMIMO has a larger dimension with smaller element spacing, and thus can achieve higher spectral efficiency and spatial resolution.} However, it is hard for traditional phased arrays to realize the HMIMO, since hundreds of high-resolution phase shifters are required, leading to unacceptable power consumption~\cite{L_2020}. 
 
 Recently, metasurface-based antennas, which are also referred to as reconfigurable refractive surfaces~(RRS)~\cite{SHBYZHL_2021}, provide a promising solution for the implementation of HMIMO due to their high energy efficiency. An RRS is an ultra thin surface inlaid with a large number of sub-wavelength elements, each of which can refract incident electromagnetic~(EM) waves and apply a tunable phase shift. By controlling the biased voltages applied to the diodes on the RRS elements, the refractive phase shifts can be reconfigured, so as to realize the desired beamforming. Different from the phased arrays, the RRSs do not contain any phase shifters, and the power consumption of each RRS element are ultra-low~\cite{M_2020}. Thus, the RRSs are more energy efficient than the phased array.

Existing works on metasurface-based wireless communications mainly focus on utilizing the metasurface as a passive relay. For example, in~\cite{SHBZL_2021}, a metasurface is deployed between a base station~(BS) and a user to extend cell coverage, and the location and orientation of the metasurface are jointly designed to maximize the cell coverage. In~\cite{BHLYZH_2020}, the authors investigate a multi-user MIMO system assisted by a metasurface, where the digital beamformer at the BS and the metasurface configuration are jointly optimized to maximize the sum rate. However, different from the existing works, the RRS acts as the transmit antenna in this letter. 

The use of the RRS as the transmit antenna brings new challenges. Specifically, unlike the metasurface working as the passive relay, the feed of the transmitter cannot be assumed to locate in the far field of the RRS since the RRS is much closer to the feed. Therefore, the distances from different RRS elements to the feed cannot be assumed to be equal, and thus the analysis of the data rate and power consumption of the RRS-aided system is challenging. To cope with the above challenges, in this letter, we consider a fundamental downlink network with one BS and one user equipment~(UE), where an RRS illuminated by a feed is deployed as the BS antenna. We first derive and analyze the data rate of the system. Then, the power consumption of the RRS is compared against the phased array under the same data rate requirement. Finally, simulation results validate our analysis and show that the RRS is an energy-efficient implementation of HMIMO.

\vspace{-0.2cm}
\section{System Model}
\subsection{Scenario Description}
\begin{figure}[!t]
	\centering
	\includegraphics[width=0.25\textwidth]{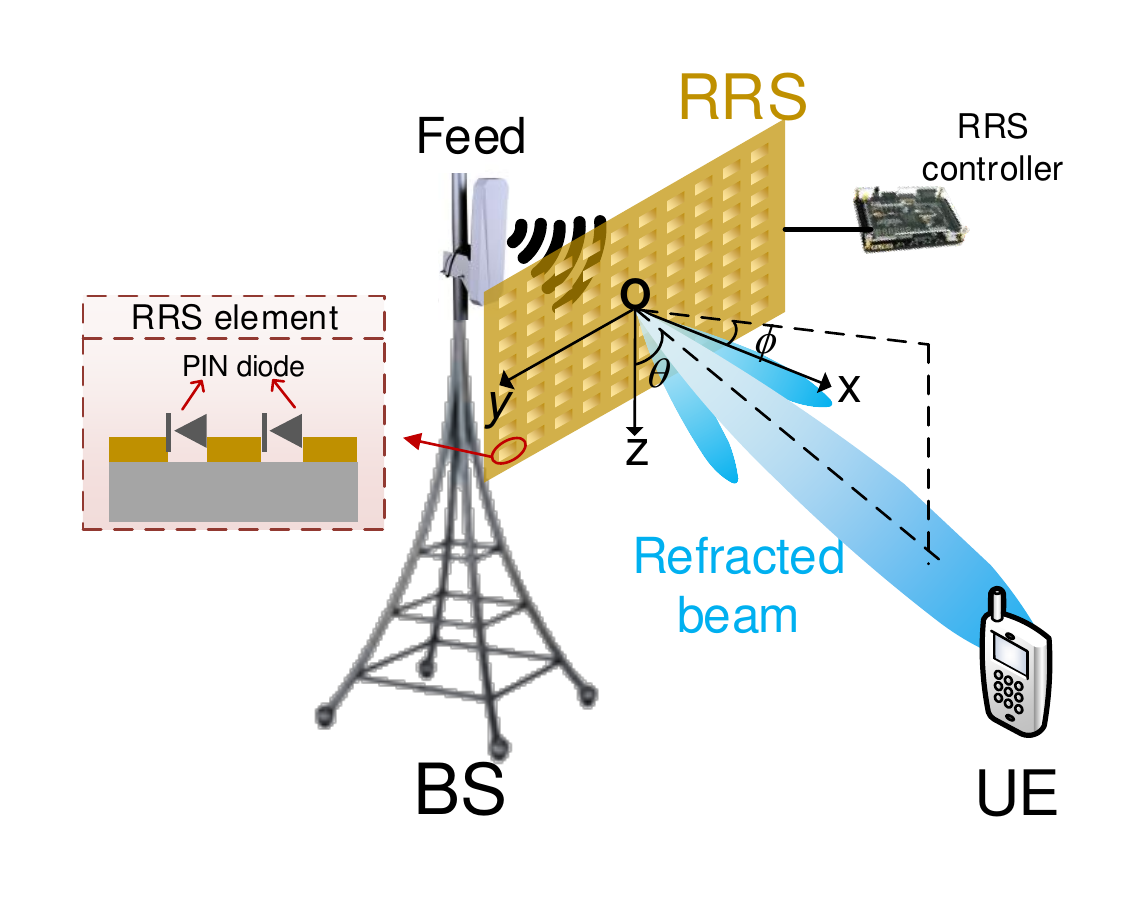}
	\vspace{-6mm}
	\caption{System model of a downlink single user network, where an RRS is used as the BS antenna.}
	\vspace{-5mm}
	\label{sysmodel}
\end{figure}
As shown in Fig.~\ref{sysmodel}, a narrow-band downlink network with one BS and one UE is considered. An RRS illuminated by a feed is utilized as the BS antenna to perform beamforming while an omin-directional antenna is adopted at the UE. For the sake of compactness, the feed is deployed in the near field of the surface. Besides, to point the transmitted wave of the feed towards the RRS, a directional antenna is used as the feed, whose radiation pattern can be given by~\cite{WG}
\begin{align}
\label{pattern_feed}
	G_F(\theta,\phi)=\left\{
	\begin{aligned}
		&2(\alpha+1)\cos^{\alpha}\theta,~\theta\in [0,\frac{\pi}{2}],\\
		&0,~\text{otherwise},
	\end{aligned}
	\right.
\end{align}
where $2(\alpha+1)$ denotes the gain of the feed. For the simplicity of discussions, we assume that the direction of the main beam is vertical to the RRS.

{To describe the topology of the network, we introduce a Cartesian coordinate as shown in Fig.~\ref{sysmodel}, where the $yoz$ plane coincides with the RRS, and the $x$-axis is vertical to the surface.} Define $r_F$ as the distance from the feed to the center of the RRS. By assuming that the feed is on the x-axis, the coordinate of the feed is given by $\bm{q}_F=[-r_F,0,0]^{\mathrm{T}}$. For the UE, let $r_U$ represent its distance to the center of the RRS, and $\theta_U$ and $\phi_U$ denote the zenith and azimuth angles, respectively.


\vspace{-0.3cm}
\subsection{Reconfigurable Refractive Surfaces}
\vspace{-0.1cm}
A reconfigurable refractive surface~(RRS) contains $M_R\times N_R$ sub-wavelength elements, each with the size of $s_{M,R}\times s_{N,R}$. Different from the traditional reflective metasurface element such as that in~\cite{BHLYZH_2020}, the RRS element does not contain the metallic ground in order to enable refraction. Each RRS element has several diodes onboard~\cite{W_transmitarray_2021}, which can be positive-intrinsic-negative~(PIN) or varactor diodes. An \emph{RRS controller} is utilized to apply different biased voltages to the diodes (i.e., \textit{ON} and \textit{OFF} states), which can influence the refraction phase shift of the element. Denote the refraction amplitude and phase shift of the $(m,n)$-th RRS element by $A$ and $\varphi^{(m,n)}$, respectively. Therefore, the transmission coefficient of the $(m,n)$-th RRS element is written by $\Gamma^{(m,n)}=A\exp{(j\varphi^{(m,n)})}$. 

Benefited from the reconfigurability, the RRS can generate controllable refractive beams. Specifically, when the signal transmitted by the feed impinges upon the RRS elements, it will be refracted by the elements with specific phase shifts. By controlling the applied phase shifts through configuring the elements' biased voltages, the radiated beam of the RRS can be directed accurately towards the UE with a high gain.
\vspace{-0.4cm}
\subsection{Signal Model}

\subsubsection{Refracted signals by the RRS elements} Define $G^{(m,n)}_F$ as the feed gain towards the direction of the $(m,n)$-th RRS element. Besides, let $D^{(m,n)}$ represent the distance between the feed and the $(m,n)$-th RRS element, and denote the wavelength corresponding to the carrier frequency as $\lambda$. Since the feed is in the near field of the RRS, the received signal at the $(m,n)$-th element from the feed can be given by~\cite{HY},
{\setlength\abovedisplayskip{0cm}
	\setlength\belowdisplayskip{0cm}
\begin{align}
\label{channel_F2R}
y_R^{(m,n)}=\bigg(\sqrt{\frac{G^{(m,n)}_FA_F^{(m,n)}}{4\pi (D^{(m,n)})^2}}\exp\left(-j\frac{2\pi}{\lambda}D^{(m,n)}\right)\bigg)x,
\end{align}
where $x$ is the transmitted signal of the feed, and $A_F^{(m,n)}$ is the projected aperture of the $(m,n)$-th RRS element towards the direction of the feed. Here, projected aperture $A_F^{(m,n)}$ can be further expressed as $A_F^{(m,n)}=(-\bm{u}_x)^{\mathrm{T}}(\bm{q}_F-\bm{q}^{(m,n)})s_{M,R}s_{N,R}/D^{(m,n)}$,
where $\hat{\bm{u}_x}$ represents the unit vector in the $x$ direction, and $\bm{q}^{(m,n)}$ is the location of the $(m,n)$-th RRS element. Afterwards, the RRS elements will refract the received signals, where the refracted signal of the $(m,n)$-th RRS element can be given by
\begin{align}
\label{refract_signal}
\widetilde{y}^{(m,n)}_R=y_R^{(m,n)}\Gamma^{(m,n)}.
\end{align}}
\vspace{-0.2cm}
\subsubsection{Channels from the RRS elements to the UE} To simplify analysis, we assume that the channels only contain pathloss. Since the size of the RRS can be extremely large, the UE can locate in either the near field or the far field of the RRS. When the UE is in the near field of the RRS, new characteristics are introduced into the channels compared with the far field case~\cite{HY}. Specifically, the wavefront corresponding to the UE can be more accurately described by the spherical wavefront. Besides, the variation of the received signal strength from different RRS elements cannot be ignored, and the projected aperture of different RRS elements is also unequal. To show these characteristics, we model the channel by
{\setlength\abovedisplayskip{0cm}
	\setlength\belowdisplayskip{0cm}
\begin{align}
\label{channel_R2U}
h_R^{(m,n)}=\sqrt{\frac{G_UA_U^{(m,n)}}{4\pi (d^{(m,n)})^2}}\exp\left(-j\frac{2\pi}{\lambda}d^{(m,n)}\right),
\end{align}
where $G_U$ is the antenna gain of the UE}, $d^{(m,n)}$ represents the distance between the UE and the $(m,n)$-th RRS element, and $A_U^{(m,n)}$ denotes the projected aperture of the $(m,n)$-th RRS element in the direction of the UE. By defining $\bm{q}_U$ as the location of the UE,  $A_U^{(m,n)}$ can be expressed as $A_U^{(m,n)}=(\bm{u}_x)^{\mathrm{T}}(\bm{q}_U-\bm{q}^{(m,n)})s_{M,R}s_{N,R}/d^{(m,n)}$.

When the refracted signals transmit through the channels from the RRS elements to the UE, the received signals at the UE can be written by 
{\setlength\abovedisplayskip{0cm}
	\setlength\belowdisplayskip{0cm}
\begin{align}
y=\sum_{m,n}h_R^{(m,n)}\widetilde{y}^{(m,n)}_R+n,
\end{align}
where $n$ represents the additive white Gaussian noise~(AWGN) at the UE with zero mean and $\sigma^2$ as variance. Therefore, the received signal-to-noise ratio~(SNR) at the UE is given by
\begin{align}
\label{SNR}
\gamma_R=|\sum_{m,n}h_R^{(m,n)}\widetilde{y}^{(m,n)}_R|^2\Big/\sigma^2.
\end{align}
\vspace{-0.3cm}
\section{Data Rate Analysis}
\vspace{-0.1cm}
\subsection{Derivation of Data Rate}
\vspace{-0.1cm}
Based on the SNR in (\ref{SNR}), the system data rate can be expressed as $C_R=\log_2(1+\frac{|\sum_{m,n}h_R^{(m,n)}\widetilde{y}^{(m,n)}_R|^2}{\sigma^2})$. Define $\Psi_U=\sin\theta_U\cos \phi_U$, $\Phi_U=\sin\theta_U\sin \phi_U$, and $\Omega_U=\cos\theta_U$. Besides, denote the transmit power by $P$. Since the data rate is correlated with the phase shifts of the RRS, it can be maximized by optimizing the phase shifts.
\vspace{-0.1cm}
\begin{theorem}
	\label{theorem_rate_RRS}
	The maximized data rate for the RRS is
	\begin{align}
	\label{max_rate}
	C_R=\log_2(1+L_R\left(\iint_{\mathcal{S}_R}{f_R(y,z)}dydz\right)^2),
	\end{align}
	where $L_R\!=\!\frac{PA^2G_UG_0\Psi_Ur_U^2}{\sigma^2(4\pi)^2r_F^2}$, $\mathcal{S}_R\!=\![-\frac{M_Rs_{M,R}}{2r_U},\!\frac{M_Rs_{M,R}}{2r_U}]\!\times\![-\frac{N_Rs_{N,R}}{2r_U},\!\frac{N_Rs_{N,R}}{2r_U}]$, and $f_R(y,z)\!=\!\Big(\!1\!+\!\frac{r_U^2}{r_F^2}(y^2\!+\!z^2)\Big)^{\!-\frac{\alpha+3}{4}}\!\Big(\!1\!-\!2\Phi_Uy\!-\!2\Omega_Uz\!+\!y^2\!+\!z^2\!\Big)^{\!-\frac{3}{4}}$

\end{theorem}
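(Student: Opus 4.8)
The plan has three steps: reduce the optimization over the refraction phases to maximizing a sum of channel magnitudes, evaluate that sum using the network geometry, and pass to the continuous-aperture (holographic) limit. For the first step, I would write $y_R^{(m,n)}=c^{(m,n)}x$ where $c^{(m,n)}$ is the feed-to-element coefficient in \eqref{channel_F2R}, so that by \eqref{refract_signal} and \eqref{SNR} the useful term at the UE is $x\sum_{m,n}h_R^{(m,n)}c^{(m,n)}\Gamma^{(m,n)}$ with $\Gamma^{(m,n)}=Ae^{j\varphi^{(m,n)}}$ and $|x|^2=P$ the transmit power. Since the refraction amplitude $A$ is fixed, the triangle inequality gives
\[
\Big|\sum_{m,n}h_R^{(m,n)}\widetilde{y}_R^{(m,n)}\Big|^2\le PA^2\Big(\sum_{m,n}\big|h_R^{(m,n)}\big|\,\big|c^{(m,n)}\big|\Big)^2 ,
\]
with equality when each $\varphi^{(m,n)}$ is chosen to co-phase the terms $h_R^{(m,n)}c^{(m,n)}e^{j\varphi^{(m,n)}}$ --- always feasible. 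Hence the maximized rate is $\log_2\!\big(1+\tfrac{PA^2}{\sigma^2}(\sum_{m,n}|h_R^{(m,n)}|\,|c^{(m,n)}|)^2\big)$, and only the inner sum remains.

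For the second step I would substitute the geometry. Writing $\bm{q}^{(m,n)}=[0,y_{m,n},z_{m,n}]^{\mathrm{T}}$ and using $\bm{q}_F=[-r_F,0,0]^{\mathrm{T}}$ gives $D^{(m,n)}=\sqrt{r_F^2+y_{m,n}^2+z_{m,n}^2}$, $A_F^{(m,n)}=r_Fs_{M,R}s_{N,R}/D^{(m,n)}$, and $\cos\theta^{(m,n)}=r_F/D^{(m,n)}$, so \eqref{pattern_feed} yields $G_F^{(m,n)}=2(\alpha+1)(r_F/D^{(m,n)})^{\alpha}$; using $\bm{q}_U=r_U[\Psi_U,\Phi_U,\Omega_U]^{\mathrm{T}}$ with $\Psi_U^2+\Phi_U^2+\Omega_U^2=1$ gives $d^{(m,n)}=\sqrt{r_U^2-2r_U\Phi_Uy_{m,n}-2r_U\Omega_Uz_{m,n}+y_{m,n}^2+z_{m,n}^2}$ and $A_U^{(m,n)}=r_U\Psi_Us_{M,R}s_{N,R}/d^{(m,n)}$ from \eqref{channel_R2U}. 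Inserting these into $|h_R^{(m,n)}|\,|c^{(m,n)}|=\tfrac{1}{4\pi}\sqrt{G_UA_U^{(m,n)}G_F^{(m,n)}A_F^{(m,n)}/\big((d^{(m,n)})^2(D^{(m,n)})^2\big)}$, the four distance factors merge into $(d^{(m,n)})^{-3/2}(D^{(m,n)})^{-(\alpha+3)/2}$, leaving $\tfrac{s_{M,R}s_{N,R}}{4\pi}\sqrt{2(\alpha+1)G_Ur_U\Psi_Ur_F^{\alpha+1}}\,(d^{(m,n)})^{-3/2}(D^{(m,n)})^{-(\alpha+3)/2}$.

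For the third step I would rescale via $y=y_{m,n}/r_U$, $z=z_{m,n}/r_U$: pulling $r_F$ out of $D^{(m,n)}$ and $r_U$ out of $d^{(m,n)}$ turns each summand into $\tfrac{s_{M,R}s_{N,R}}{4\pi r_Ur_F}\sqrt{2(\alpha+1)G_U\Psi_U}\,f_R(y,z)$ with $f_R$ exactly as stated. In the HMIMO regime the sub-wavelength elements tile the surface densely, each covering area $s_{M,R}s_{N,R}$ on the RRS and hence $s_{M,R}s_{N,R}/r_U^2$ in the $(y,z)$ variables, so $\sum_{m,n}f_R(y_{m,n}/r_U,z_{m,n}/r_U)\approx\tfrac{r_U^2}{s_{M,R}s_{N,R}}\iint_{\mathcal{S}_R}f_R(y,z)\,dy\,dz$, with $\mathcal{S}_R$ the image of $[-\tfrac{M_Rs_{M,R}}{2},\tfrac{M_Rs_{M,R}}{2}]\times[-\tfrac{N_Rs_{N,R}}{2},\tfrac{N_Rs_{N,R}}{2}]$ under the scaling. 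The $s_{M,R}s_{N,R}$ factors cancel; squaring, multiplying by $PA^2/\sigma^2$, and collecting constants (writing $G_0=2(\alpha+1)$ for the feed boresight gain) yields $L_R$ and \eqref{max_rate}.

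I expect the main obstacle to be the geometric bookkeeping in the second step: the two projected-aperture cosines, the feed-pattern cosine, and the two inverse-square spreading terms have to combine into precisely the exponents $-\tfrac{\alpha+3}{4}$ and $-\tfrac{3}{4}$ in $f_R$, and a single sign or exponent slip would corrupt $L_R$. A secondary subtlety is the legitimacy of the sum-to-integral passage, which is the continuous-aperture idealization underlying HMIMO and is exactly where the sub-wavelength element spacing is invoked.
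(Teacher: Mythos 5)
Your proposal is correct and follows essentially the same route as the paper's proof: co-phasing the elements to turn the coherent sum into a sum of magnitudes, substituting the near-field geometry to express each term as $f_R$ evaluated at the scaled element coordinates, and passing from the sum to the integral over $\mathcal{S}_R$ (the step the paper delegates to Appendix A of its reference [HY]). Your explicit bookkeeping of the projected apertures, feed pattern, and spreading losses correctly reproduces the exponents in $f_R$ and identifies $G_0=2(\alpha+1)$ in $L_R$, so the only difference is that you spell out the algebra the paper leaves implicit.
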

\begin{proof}
When the phases of different RRS-based channels are aligned, the data rate can be maximized as
{\setlength\abovedisplayskip{0cm}
	\setlength\belowdisplayskip{0cm}
	\begin{align}
	\label{RRS_maximize}
	C_R=\log_2\Big(1+\Big(\sum_{m,n}|h_R^{(m,n)}\widetilde{y}^{(m,n)}_R|\Big)^2\big/\sigma^2\Big),
	\end{align}
	which} can be further written as $C_R=\log_2(1+L_0(\sum_{m,n}f_R(\frac{ms_{M,R}}{r_U},\frac{ns_{N,R}}{r_U}))^2)$. Afterwards, following the method presented in Appendix A of~\cite{HY}, the maximum data rate can be transformed into the form in (\ref{max_rate}). 
\end{proof}
\vspace{-0.2cm}
{
\begin{remark}
	Theorem~\ref{theorem_rate_RRS} indicates that the RRS can influence the data rate via the number $M_RN_R$ of the RRS elements, the refraction amplitude $A$ of the elements, the gain $\alpha$ of the feed, and the distance $r_F$ between the feed and the RRS.
\end{remark}}
\vspace{-0.2cm}
Consider a circle $\mathcal{O}_R$ with the origin as center and $\min{(\frac{M_Rs_{M,R}}{2r_U},\frac{N_Rs_{N,R}}{2r_U})}$ as radius, which is found to be a part of region $\mathcal{S}_R$. Since $f_R(y,z)\ge 0$, a lower bound for $C_R$ can be derived by replacing the integral region $\mathcal{S}_R$ with $\mathcal{O}_R$, as shown in the following remark, which will be used for analyzing the power consumption of the RRS.
\begin{remark}
	Data rate $C_R$ can be lower bounded by
	{\setlength\abovedisplayskip{0cm}
		\setlength\belowdisplayskip{0cm}
	\begin{align}
	\label{rate_lower_bound}
	C_R\ge \log_2\Big(1+L_R\Big(\iint_{\mathcal{O}_R}{f_R(y,z)}dydz\Big)^2\Big)\triangleq C_R^{lb}.
	\end{align}}
\end{remark}
{
\begin{remark}
	\label{remark_lower_bound_achievable}
	The lower bound $C_R^{lb}$ in (\ref{rate_lower_bound}) is achievable when the number $M_RN_R$ of the RRS elements is sufficiently large while the gap to $C_R^{lb}$ is small when $M_RN_R$ is relatively small.
\end{remark}}
\vspace{-0.2cm}
For comparison, the data rate for the phased array is also derived. Assume that each antenna element within the phased array is omni-directional, whose antenna gain is denoted by $G_E$. Besides, use $M_PN_P$ and $(s_{M,P},s_{N,P})$ to represent the number of the phased array elements and the separations among the elements, respectively. 

\vspace{-0.2cm}
\begin{theorem}
	\label{theorem_rate_PA}
	The maximized data rate for the phased array under optimized phase shifters can be given by,
	{\setlength\abovedisplayskip{0cm}
		\setlength\belowdisplayskip{0.1cm}
	\begin{align}
	\label{rate_PA}
	C_P=\log_2(1+L_P(\iint_{\mathcal{S}_P}f_P(y,z)dydz)^2)
	\end{align}
	where} $L_P=\frac{P\lambda^2G_EG_Ur_U^2}{(4\pi)^2\sigma^2s_{M,P}^2s_{N,P}^2}$, $\mathcal{S}_P\!=\![\!-\!\frac{M_Ps_{M,P}}{2r_U}\!,\!\frac{M_Ps_{M,P}}{2r_U}\!]\!\times\![\!-\!\frac{N_Ps_{N,P}}{2r_U}\!,\!\frac{N_Ps_{N,P}}{2r_U}\!]$, and $f_P(y,z)=\frac{1}{\sqrt{M_PN_P}}\frac{1}{\sqrt{1+y^2+z^2-2\Phi_Uy-2\Omega_Uz}}$.
\end{theorem}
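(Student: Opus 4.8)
The plan is to follow the same three-step template as the proof of Theorem~\ref{theorem_rate_RRS}, now specialized to the phased-array channel. First I would write the end-to-end signal model: with total power $P$ split equally over the $M_P N_P$ omni-directional elements and each element applying a unit-modulus phase weight $w^{(m,n)}$, the UE receives $y = \sqrt{P/(M_P N_P)}\sum_{m,n} h_P^{(m,n)} w^{(m,n)} x + n$, where the free-space channel from the $(m,n)$-th element to the UE is $h_P^{(m,n)} = \frac{\lambda\sqrt{G_E G_U}}{4\pi d^{(m,n)}}\exp(-j\frac{2\pi}{\lambda}d^{(m,n)})$ and $d^{(m,n)}$ is the element--UE distance. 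The SNR is then $\gamma_P = \frac{P}{M_P N_P \sigma^2}\big|\sum_{m,n} h_P^{(m,n)} w^{(m,n)}\big|^2$, which by the triangle inequality is maximized when the phase shifters co-phase all terms, giving $\gamma_P = \frac{P}{M_P N_P \sigma^2}\big(\sum_{m,n} |h_P^{(m,n)}|\big)^2$.

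Second, I would make $d^{(m,n)}$ explicit. Placing the $(m,n)$-th element at $\bm{q}^{(m,n)} = [0,\, m s_{M,P},\, n s_{N,P}]^{\mathrm{T}}$ (indices centered about the array midpoint) and the UE at $\bm{q}_U = r_U[\Psi_U,\Phi_U,\Omega_U]^{\mathrm{T}}$, expanding $d^{(m,n)} = \|\bm{q}_U - \bm{q}^{(m,n)}\|$ and invoking $\Psi_U^2 + \Phi_U^2 + \Omega_U^2 = 1$ yields $d^{(m,n)} = r_U\sqrt{1 - 2\Phi_U \frac{m s_{M,P}}{r_U} - 2\Omega_U \frac{n s_{N,P}}{r_U} + (\frac{m s_{M,P}}{r_U})^2 + (\frac{n s_{N,P}}{r_U})^2}$. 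Substituting $1/d^{(m,n)}$ into the co-phased SNR and pulling the constants out, each summand becomes $\sqrt{M_P N_P}\,f_P$ evaluated at the grid point $(y,z) = (\frac{m s_{M,P}}{r_U},\frac{n s_{N,P}}{r_U})$, so that $\gamma_P = \frac{P\lambda^2 G_E G_U}{(4\pi)^2\sigma^2 r_U^2}\big(\sum_{m,n} f_P(\frac{m s_{M,P}}{r_U},\frac{n s_{N,P}}{r_U})\big)^2$; note the $M_P N_P$ from the gain sum cancels the per-element power split, which is why $L_P$ carries no such factor.

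Third, following the Riemann-sum argument of Appendix~A of~\cite{HY}, I would approximate the double sum over the grid of spacing $(\frac{s_{M,P}}{r_U},\frac{s_{N,P}}{r_U})$ by $\frac{r_U^2}{s_{M,P}s_{N,P}}\iint_{\mathcal{S}_P} f_P(y,z)\,dy\,dz$ over the rectangle $\mathcal{S}_P$ that the grid fills out; squaring this Jacobian factor and merging it with the prefactor produces exactly $L_P = \frac{P\lambda^2 G_E G_U r_U^2}{(4\pi)^2\sigma^2 s_{M,P}^2 s_{N,P}^2}$, and substituting into $C_P = \log_2(1+\gamma_P)$ gives~(\ref{rate_PA}). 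The only non-mechanical step — and the main obstacle — is the sum-to-integral passage, whose accuracy requires the element spacing to be small relative to $r_U$ (a large, dense array), which is precisely the holographic regime; everything else is bookkeeping of geometric constants. In contrast with the RRS case there is no feed-side factor and the elements are treated as ideal point radiators of gain $G_E$, so no projected-aperture term appears and $f_P$ carries only the $-\tfrac12$ exponent coming from $1/d^{(m,n)}$.
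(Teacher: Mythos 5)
Your proposal is correct and follows essentially the same route as the paper's Appendix A: the same free-space channel model $h_P^{(m,n)}=\frac{\lambda\sqrt{G_EG_U}}{4\pi d^{(m,n)}}\exp(-j\frac{2\pi d^{(m,n)}}{\lambda})$, co-phasing via the phase shifters to turn the coherent sum into a sum of magnitudes, and then the same sum-to-integral passage borrowed from the proof of Theorem~\ref{theorem_rate_RRS} (Appendix~A of~\cite{HY}). You simply make explicit the geometric expansion of $d^{(m,n)}$ and the cancellation of the $M_PN_P$ factors, which the paper leaves implicit.
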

\begin{proof}
	See Appendix~\ref{app_PA_rate}.
\end{proof}
\vspace{-0.7cm}
\subsection{Discussions on Data Rate}
\vspace{-0.1cm}
{
\begin{lemma}
	\label{remark_limited}
	According to the derived data rate in (\ref{max_rate}), for an infinitely large RRS, the data rate does not increase unbounded, which is consistent with practical results. 
\end{lemma}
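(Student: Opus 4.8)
The plan is to start directly from the closed form (\ref{max_rate}) and show that, even though the integration domain $\mathcal{S}_R$ grows to fill the whole plane $\mathbb{R}^2$ as the surface is enlarged (that is, as $M_Rs_{M,R}\to\infty$ and $N_Rs_{N,R}\to\infty$), the double integral $\iint_{\mathcal{S}_R}f_R(y,z)\,dy\,dz$ stays finite. Since $f_R\ge 0$, this integral is monotonically non-decreasing as $\mathcal{S}_R$ enlarges, so it suffices to prove that $I_\infty\triangleq\iint_{\mathbb{R}^2}f_R(y,z)\,dy\,dz<\infty$. Because $L_R$ in (\ref{max_rate}) does not depend on the number of RRS elements, this immediately yields the uniform bound $C_R\le\log_2(1+L_RI_\infty^2)<\infty$, which is exactly the claim.

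The first step is to rule out any interior singularity of $f_R$. Using $\Psi_U^2+\Phi_U^2+\Omega_U^2=\sin^2\theta_U+\cos^2\theta_U=1$, completing the square gives $1-2\Phi_Uy-2\Omega_Uz+y^2+z^2=(y-\Phi_U)^2+(z-\Omega_U)^2+\Psi_U^2\ge\Psi_U^2$, so the second factor of $f_R$ is bounded, and therefore the whole integrand is continuous and bounded on $\mathbb{R}^2$ whenever $\Psi_U>0$. (The degenerate case $\Psi_U=0$, i.e. the UE lies in the plane of the RRS, makes $L_R=0$ and hence $C_R=0$, so it is trivial and may be set aside.) Consequently $I_\infty$ can only fail to be finite through the tail $\{y^2+z^2\to\infty\}$.

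The second step is a tail estimate in polar coordinates $y=\rho\cos\psi$, $z=\rho\sin\psi$. For $\rho\ge 2$ we have $|\Phi_Uy+\Omega_Uz|\le\sqrt{\Phi_U^2+\Omega_U^2}\,\rho\le\rho$, so the second factor is at most $\big((\rho-1)^2\big)^{-3/4}\le(\rho^2/4)^{-3/4}=2^{3/2}\rho^{-3/2}$, while the first factor is at most $\big(\tfrac{r_U^2}{r_F^2}\rho^2\big)^{-(\alpha+3)/4}=(r_F/r_U)^{(\alpha+3)/2}\rho^{-(\alpha+3)/2}$. Hence $f_R\le C\,\rho^{-(\alpha+6)/2}$ on $\{\rho\ge 2\}$ for a constant $C$, and with the polar Jacobian $\rho$ the tail contributes at most $2\pi C\int_2^\infty\rho^{-(\alpha+4)/2}\,d\rho$, which is finite because $(\alpha+4)/2>1$ for every feed exponent $\alpha\ge 0$. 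The remaining bounded region $\{\rho<2\}$ contributes at most $4\pi\sup_{\mathbb{R}^2}f_R<\infty$ by the previous step. Adding the two pieces yields $I_\infty<\infty$, which finishes the argument.

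I expect the crux to be this tail estimate: one must confirm that, after multiplication by the polar Jacobian, the integrand decays strictly faster than $\rho^{-1}$ — precisely the quantitative fact separating a bounded-rate conclusion from an unbounded one — and it relies on the product of the feed-pathloss factor and the UE-pathloss factor, neither of which alone would suffice when $\alpha=0$. The completion-of-square step that removes the would-be singularity at $(y,z)=(\Phi_U,\Omega_U)$ is routine but is a necessary preliminary so that the $\{\rho<2\}$ piece is manifestly finite.
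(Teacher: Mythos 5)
Your proof is correct, and it takes a genuinely different (and in one respect more careful) route than the paper's. The paper's Appendix B first upper-bounds $f_R$ by its value at $\alpha=0$ and then applies triangle-type inequalities to replace each factor's base by $\bigl|\sqrt{(r_U/r_F)^2(y^2+z^2)}-1\bigr|$ and $\bigl|\sqrt{y^2+z^2}-1\bigr|$, obtaining a majorant $f_R^{ub}$ whose integral is asserted to be finite; this captures the correct tail decay $f_R^{ub}\sim\rho^{-3}$, but the majorant itself has non-integrable $|\cdot|^{-3/2}$ singularities on the circles $\rho=1$ and $\rho=r_F/r_U$, so the paper's bound is, taken literally, infinite on any region containing those circles and only works away from them. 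You avoid this entirely: the completion-of-square identity $1-2\Phi_Uy-2\Omega_Uz+y^2+z^2=(y-\Phi_U)^2+(z-\Omega_U)^2+\Psi_U^2\ge\Psi_U^2$ shows $f_R$ is globally bounded (with the degenerate case $\Psi_U=0$ correctly dispatched via $L_R=0$), and your polar tail estimate $f_R\lesssim\rho^{-(\alpha+6)/2}$ gives convergence for every $\alpha\ge 0$ without reducing to $\alpha=0$. What the paper's approach buys is an explicit closed-form majorant $C_R^{ub}$ that it can quote; what yours buys is a rigorous finiteness argument with no spurious singularities and a sharper view of why the product of the feed-pathloss and UE-pathloss factors is what forces boundedness.
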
}
\vspace{-0.4cm}
\begin{proof}
	See Appendix~\ref{app_scaling}.
\end{proof}
\vspace{-0.3cm}
Compared with the phased array, the RRS has several new degrees of freedom for design, i.e., the distance $r_F$ between the feed and the RRS, and the gain $\alpha$ of the feed. Therefore, the impacts of $r_F$ and $\alpha$ on the data rate are discussed.
\vspace{-0.2cm}
{
\begin{lemma}
	\label{rate_influence}
	When the number of the RRS elements is sufficiently large\footnote{A sufficiently large RRS is one that captures most power radiated by the feed.}, the data rate is positively correlated with the distance $r_F$ between the feed and the RRS, and is  negatively correlated with the gain $\alpha$ of the feed.
\end{lemma}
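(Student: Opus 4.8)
The plan is to read off from Theorem~\ref{theorem_rate_RRS} that, since $\log_2(1+\cdot)$ is strictly increasing, $C_R$ inherits the monotonicity of the single quantity
\[ g(\alpha,r_F):=L_R\Big(\iint_{\mathcal{S}_R}f_R(y,z)\,dy\,dz\Big)^2 . \]
Here $L_R=\frac{PA^2G_UG_0\Psi_Ur_U^2}{\sigma^2(4\pi)^2r_F^2}$ with feed gain $G_0=2(\alpha+1)$, i.e.\ $L_R\propto(\alpha+1)/r_F^2$; it is this factor that makes the $\alpha$-claim non-obvious, because a sharper feed simultaneously raises the captured power and, through the first factor of $f_R$, concentrates the illumination onto fewer elements (if one reads $G_0$ as $\alpha$-independent, the $\alpha$-term below simply drops and the conclusion is immediate). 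The ``sufficiently large RRS'' hypothesis is what lets me, via Remark~\ref{remark_lower_bound_achievable}, replace the finite window $\mathcal{S}_R$ by all of $\mathbb{R}^2$ when evaluating $g$; the resulting integral is finite because $f_R(y,z)=\Theta\big((y^2+z^2)^{-(\alpha+6)/4}\big)$ at infinity and $\alpha\ge0>-2$. (Intuitively, an infinite surface captures exactly the radiated power $P$ for every $\alpha$ and $r_F$, so what remains is only how that power is spread over the elements, which is what sets the coherent-combining gain.)

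For the $\alpha$-dependence I would differentiate $\ln g$. Only $\alpha+1$ in $L_R$ and the exponent $-\tfrac{\alpha+3}{4}$ in $f_R$ carry $\alpha$, so
\[ \frac{\partial\ln g}{\partial\alpha}=\frac{1}{\alpha+1}-\frac12\,\overline{\ln\!\Big(1+\tfrac{r_U^2}{r_F^2}(y^2+z^2)\Big)}, \]
where $\overline{(\cdot)}$ denotes the average with respect to the probability density $\propto f_R$ on $\mathbb{R}^2$ (differentiation under the integral being justified by dominated convergence using the tail bound above). The substitution $(y,z)=\tfrac{r_F}{r_U}(u,v)$ turns this density into one proportional to $(1+u^2+v^2)^{-(\alpha+3)/4}\,\tilde h(u,v)$, with $\tilde h$ the $\alpha$-free second factor; in the natural operating regime $r_F\ll r_U$ one has $\tilde h\approx1$ over the effective support, and a short computation gives $\overline{\ln(1+u^2+v^2)}=\tfrac{4}{\alpha-1}$ for $\alpha>1$ and $\overline{\ln(1+u^2+v^2)}=\Theta\big(\ln(r_U/r_F)\big)$ for $0<\alpha\le1$. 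In both cases this exceeds $\tfrac{2}{\alpha+1}$, hence $\partial_\alpha\ln g<0$: $C_R$ is negatively correlated with $\alpha$.

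For the $r_F$-dependence the same bookkeeping gives
\[ \frac{\partial\ln g}{\partial r_F}=\frac{1}{r_F}\Big(-2+(\alpha+3)\,\overline{\tfrac{\tau}{1+\tau}}\Big),\qquad \tau:=\tfrac{r_U^2}{r_F^2}(y^2+z^2), \]
the $-2$ coming from $L_R\propto r_F^{-2}$ and the positive term from the fact that raising $r_F$ increases the first factor of $f_R$ pointwise (the feed's footprint on the surface grows and engages more elements). Under the same substitution $\overline{\tau/(1+\tau)}=\overline{(u^2+v^2)/(1+u^2+v^2)}$, which equals $\tfrac{4}{\alpha+3}$ for $\alpha>1$ and is $\approx1$ for $0<\alpha\le1$, so $(\alpha+3)\,\overline{\tau/(1+\tau)}>2$ and $\partial_{r_F}\ln g>0$: $C_R$ is positively correlated with $r_F$.

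The technical heart is the rigorous handling of the two weighted averages. When $\alpha\le1$ the leading factor $(1+u^2+v^2)^{-(\alpha+3)/4}$ is not integrable over $\mathbb{R}^2$ on its own, so the second factor $\tilde h$ -- equivalently, the finite extent of the RRS -- cannot be discarded; this is precisely where ``captures most power'' does real work, by fixing the cutoff radius $\sim r_U/r_F$ of the illumination. One must also verify that retaining the full $\tilde h$, which is peaked off the feed axis (toward larger $|(u,v)|$), only shifts probability mass outward and hence strengthens rather than weakens both inequalities, and that the replacement $\iint_{\mathcal{S}_R}\!\to\!\iint_{\mathbb{R}^2}$ can be made quantitative uniformly in $\alpha$ and $r_F$ over the ranges of interest. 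These are where I expect the real work to lie; the calculus of the previous two paragraphs is routine once differentiation under the integral sign is justified.
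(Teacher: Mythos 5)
Your proposal is essentially correct in outline but takes a genuinely different route from the paper. The paper never touches the integral formula of Theorem~\ref{theorem_rate_RRS}: it works with the discrete-sum form of the SNR, applies the Cauchy--Schwarz inequality to obtain the upper bound in (\ref{cauchy}), observes that for a sufficiently large RRS this bound is independent of $r_F$ and $\alpha$ (the factor $\sum_{m,n}|y_R^{(m,n)}|^2$ tends to the total radiated power, which is fixed by the normalization of the feed pattern), and then argues qualitatively that increasing $r_F$ or decreasing $\alpha$ makes the illumination of the surface more uniform, i.e.\ brings the ratios $|y_R^{(m,n)}|/|h_R^{(m,n)}\Gamma^{(m,n)}|$ closer to equality and hence closes the Cauchy--Schwarz gap toward a fixed ceiling. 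You instead differentiate the logarithm of the rate expression and reduce the claim to two weighted-average inequalities; this makes explicit the competition between the boresight gain $2(\alpha+1)$ hidden in $L_R$ and the beam-concentration exponent $-(\alpha+3)/4$ in $f_R$, a tension the paper resolves implicitly through the total-power argument. Your version is more quantitative and correctly identifies where the ``sufficiently large'' hypothesis does real work (integrability and the $r_U/r_F$ cutoff for $\alpha\le 1$), at the cost of the uniformity and $\tilde h$-perturbation issues you flag at the end (note also that $\tilde h$ peaks at the point of $\mathcal{S}_R$ nearest the UE direction and then decays, so ``shifts mass outward'' needs care). The paper's version is shorter but no more rigorous: ``the gap can be narrowed'' is asserted rather than proved, and a shrinking gap to a fixed upper bound yields monotonicity only if the shrinkage is itself monotone, which is not established. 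Both arguments are heuristic at the same step; yours at least isolates the exact inequalities that would have to be verified.
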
}
\vspace{-0.4cm}
\begin{proof}
	Since the data rate $C_R$ monotonically increases with the received SNR $\gamma_R$ at the UE, we will show the influence of $r_F$ and $\alpha$ on $\gamma_R$ as follows. 
	
	Based on (\ref{RRS_maximize}), the received SNR can be rewritten as $\gamma_R=\big(\sum_{m,n}|h_R^{(m,n)}\Gamma^{(m,n)}||y_R^{(m,n)}|\big)^2\big/\sigma^2$. Then, by applying the Cauchy inequality, an upper bound for $\gamma_R$ can be acquired:
	\begin{align}
	\label{cauchy}
	\gamma_R\le \frac{1}{\sigma^2}(\sum_{m,n}|h_R^{(m,n)}\Gamma^{(m,n)}|^2)(\sum_{m,n}|y_R^{(m,n)}|^2).
	\end{align}
	Then, we show that when $r_F$ increases or $\alpha$ decreases, the gap between $\gamma_R$ and the derived upper bound in (\ref{cauchy}) becomes smaller. Specifically, the equality in (\ref{cauchy}) holds only when
\begin{align}
\frac{|y_R^{(m_1,n_1)}|}{|h_R^{(m_1,n_1)}\Gamma^{(m_1,n_1)}|}=\frac{|y_R^{(m_2,n_2)}|}{|h_R^{(m_2,n_2)}\Gamma^{(m_2,n_2)}|}.
\end{align}
	Denote the index of the center element of the RRS by $(0,0)$. Since the feed directs more power towards the center of the RRS than the edge of the RRS, and the feed is closer to the RRS as compared to the distance between the UE and the RRS, we have $\frac{|y_R^{(0,0)}|}{|h_R^{(0,0)}\Gamma^{(0,0)}|}>\frac{|y_R^{(m,n)}|}{|h_R^{(m,n)}\Gamma^{(m,n)}|}$ for $(m,n)\succ (0,0)$ based on (\ref{channel_F2R}) and (\ref{channel_R2U}). Therefore, there is gap between $\gamma_R$ and its upper bound in (\ref{cauchy}). However, according to  (\ref{pattern_feed}), when $\alpha$ becomes smaller, the radiation of the feed is more evenly distributed in different directions, and thus the gap can be narrowed. Besides, the gap can also be reduced by increasing the distance $r_F$ between the feed and the RRS. 
	
	Given the upper bound in (\ref{cauchy}) which is not related to $r_F$ and $\alpha$, we find that when $r_F$ increases or $\alpha$ decreases, the SNR $\gamma_R$ becomes larger, which verifies the remark. 
\end{proof}

%

\vspace{-0.5cm}
\section{Power Consumption Comparison}
In this part, we will compare the power consumption of the phased array and the RRS. For fairness, we assume that the required data rate for the RRS is equal to that for the phased array\footnote{For fairness, the dynamic transmit power for the RRS and that for the phased array are set equal, as well.}, which are represented by $C$. Thus, the data rate $C_R^{lb}$ for the RRS and $C_P$ for the phased array satisfy $C_R^{lb}=C_P=C$. To guarantee the quality-of-service~(QoS), we also set a minimum required data rate $C^{min}$, i.e., $C\ge C^{min}$. Moreover, it is assumed that the power consumption of the BS antenna cannot exceed $P^{max}$.
\vspace{-0.4cm}
\subsection{Power Consumption Model}
\vspace{-0.1cm}
In general, the power consumption of the BS antenna contains two parts, i.e., transmit power $P$ and the power consumed to support the operation of the antenna, denoted by $P_R$ for the RRS and $P_P$ for the phased array. {Since the dynamic transmit power for the RRS and that for the phased array are set equal, we only discuss $P_R$ and $P_P$ in the following.}
\subsubsection{Power consumption model for the RRS}
Based on the structure of the RRS-based antenna, we can find that power consumption $P_R$ mainly comes from the diodes on the RRS elements and the controller~\cite{W_transmitarray_2021}. Note that the controller is composed of an FPGA and several voltage converters~\cite{J_2019}, and both of them are active components. Therefore, we have
{\setlength\abovedisplayskip{0cm}
	\setlength\belowdisplayskip{0cm}
\begin{align}
\label{power_RRS}
P_R&=M_RN_RL_R^{(D)}P_R^{(D)}+\frac{M_RN_R}{Q}P_R^{(V)}+P_R^{(F)},
\end{align}
where $P_R^{(D)}$, $P_R^{(V)}$, and} $P_R^{(F)}$ are the power consumption of one diode, one voltage converter, and the FPGA, respectively, $L_R^{(D)}$ is the number of diodes within an RRS element, and $Q$ is the number of RRS elements within one group\footnote{For the ease of control, the metasurface is generally divided into several groups~\cite{HSBYMMLZH}. Each group contains a couple of elements sharing one voltage converter, and thus, the biased voltages for these elements are the same.}. 
\subsubsection{Power consumption model for the phased array}
For comparison, a model for the power consumption $P_P$ of the phased array is also required. The phased array generally contains one power divider, where each output of the power divider connects to one phase shifter followed by one antenna element. Besides, an FPGA is utilized to configure the phase shifters. Since the power divider and the antenna elements are passive, the total power consumption is
\begin{align}
\label{power_PA}
P_P=M_PN_PP_P^{(S)}+P_P^{(F)},
\end{align}
where $P_P^{(S)}$ and $P_P^{(F)}$ are the power consumption of one phase shifter and the FPGA, respectively. We assume that the power consumption of the FPGA in the RRS is equal to that in the phased array, i.e., $P_R^{(F)}=P_P^{(F)}$.
\vspace{-0.4cm}
\subsection{Comparison of Power Consumptions between the RRS and the Phased array}
\vspace{-0.1cm}
Our aim is to derive the range of the required data rate $C$ under which the RRS is more energy efficient. We would like to point out that  different $C$ can be achieved by changing the size of the RRS and the phase array. Define $l_{pw}$ as the ratio of power consumptions between one phased array element and one RRS element, i.e.,
{\setlength\abovedisplayskip{0cm}
	\setlength\belowdisplayskip{0cm}
 \begin{align}
 \label{ratio_ele}
 l_{pw}=P_P^{(S)}\Big/\Big(L_R^{(D)}P_R^{(D)}+P_R^{(V)}/Q\Big).
 \end{align}
According to (\ref{power_RRS}) and (\ref{power_PA})}, we have the following remark.
\vspace{-0.2cm}
\begin{remark}
	\label{the_power}
	When $\frac{M_RN_R}{M_PN_P}\!\le\! l_{pw}$, the RRS consumes less power than the phased array. Otherwise, the phased array has a smaller power consumption.
\end{remark}
\vspace{-0.2cm}
It can be found that the required data rate $C$ has an influence on $\frac{M_RN_R}{M_PN_P}$ while $l_{pw}$ is not correlated with $C$. Define $g$ as a function indicating the relation between $\frac{M_RN_R}{M_PN_P}$ and $C$, i.e., $g(C)=\frac{M_RN_R}{M_PN_P}$. The closed-form expression of $g(\cdot)$ is available only when the UE locates in the far field of BS antenna. This is because in the far field case, the distances from different antenna elements to the UE are approximately the same, and thus the expressions for the data rate in (\ref{rate_lower_bound}) and (\ref{rate_PA}) can be rewritten in a closed form, from which the closed-form expression of $g(\cdot)$ can be acquired. Therefore, the following discussions are separated into two parts, i.e., 1) the UE is in the far field of the RRS and the phased array, and 2) the UE locates in the near field of at least one antenna. 


\subsubsection{The UE locates in the far field}
The far field assumption indicates that the distance from the UE to the RRS (the phased array) is larger than the boundary of the far field of the RRS (the phased array), i.e., $r_U>2((M_Rs_{M,R})^2+(N_Rs_{N,R})^2)/\lambda$, and $r_U>2((M_Ps_{M,P})^2+(N_Ps_{N,P})^2)/\lambda$. Therefore, the number of the RRS elements and that of the phased array elements are limited by $M_RN_R < k_Rr_U\lambda/[2(k_R^2s_{M,R}^2+s_{N,R}^2)]\triangleq (M_RN_R)^{(thr)}$, and $M_PN_P< k_Pr_U\lambda/[2(k_P^2s_{M,P}^2+s_{N,P}^2)]\triangleq (M_PN_P)^{(thr)}$, respectively. Besides, according to (\ref{rate_lower_bound}) and (\ref{rate_PA}), the data rate $C_R^{(lb)}$ for the RRS and the data rate $C_P$ for the phased array are positively correlated with $M_RN_R$ and $M_PN_P$, respectively. Therefore, $C_R^{(lb)}$ and $C_P$ are smaller than their values under $(M_RN_R)^{(thr)}$ and $(M_PN_P)^{(thr)}$, respectively. Due to the equal data rate requirement between the RRS and the phased array, denoted by $C$, we have $C_R^{lb}=C_P\triangleq C$. Therefore, $C$ is upper bounded by
{\setlength\abovedisplayskip{0cm}
	\setlength\belowdisplayskip{0cm}
\begin{align}
C^{thr} \triangleq \min\{C_R^{lb}((M_RN_R)^{(thr)}),C_P((M_PN_P)^{(thr)})\}.
\end{align}
Recall that} a minimum required data rate $C^{min}$ is set in order to guarantee the QoS. As a result, we have $C\in [C^{min},C^{thr})$. It is worthwhile noting that when the UE is in the near field of the BS antenna, the required data rate $C$ is larger than $C^{thr}$, and thus $C^{thr}$ can be regarded as a threshold between the required data rate corresponding to the far field and that corresponding to the near field.

Without loss of generality, we assume that the dimension of the RRS in the $y$ direction is larger than that in the $z$ direction, i.e., $M_Rs_{M,R}> N_Rs_{N,R}$. Under the far field assumption, the data rate for the RRS in (\ref{rate_lower_bound}) and the data rate for the phased array in (\ref{rate_PA}) can be simplified into
{\setlength\abovedisplayskip{0cm}
	\setlength\belowdisplayskip{0cm}
\begin{align}
\label{lower_rate_RRS}
C_R^{lb}\!=\!\log_2\Big(1\!+\!\Big(\frac{4\pi L_R^{\frac{1}{2}}r_F^2}{(\alpha\!-\!1)r_U^2}\big(1\!-\!(1\!+\!\frac{M_RN_Rk_Rs_{N,R}^2}{4r_F^2})^{\frac{1\!-\!\alpha}{4}}\big)\Big)^2\Big),
\end{align}}
{\setlength\abovedisplayskip{-0.2cm}
	\setlength\belowdisplayskip{0cm}
\begin{align}
\label{rate_PA_re}
C_P=\log_2\big(1+(M_PN_Ps_{M,P}^2s_{N,P}^2L_P)/r_U^4\big),
\end{align}
respectively.} Recall that $C_R^{lb}=C_P\triangleq C$. Thus, from (\ref{lower_rate_RRS}) and (\ref{rate_PA_re}), we can derive the expression for the function $g(\cdot)$ indicating the relation between $\frac{M_RN_R}{M_PN_P}$ and $C$ as shown below,
{\setlength\abovedisplayskip{0cm}
	\setlength\belowdisplayskip{-0.2cm}
\begin{align}
\label{close_form}
g(C)\!=\!\frac{4s_{M,P}^2s_{N,P}^2L_Pr_F^2}{k_Rs_{N,R}^2r_U^4(2^{C}\!-\!1)}\Big[(1\!-\!\frac{(\alpha-1)r_U^2}{4\pi L_R^{0.5}r_F^2}\sqrt{2^{C}\!-\!1})^{-\frac{4}{\alpha\!-\!1}}\!-1\!\Big]\notag.
\end{align}}
\begin{remark}
	\label{mono}
	$g(C)$ first decreases and then increases with $C$. Besides, under sufficiently small or sufficiently large data rate, $g(C)$ is larger than $l_{pw}$ defined in (\ref{ratio_ele}).
\end{remark}
We can infer from Remark~\ref{mono} that there are only two data rate satisfying $g(C)=l_{pw}$, denoted by $C^{(e,1)}$ and $C^{(e,2)}$, respectively\footnote{In general, there exists data rate under which the RRS consumes less power than the phased array, i.e., $\exists C, s.t. g(C)<l_{pw}$.}. Without loss of generality, we  assume that $C^{(e,1)}<C^{(e,2)}$. Recall that the RRS consumes less power only when $\frac{M_RN_R}{M_PN_P}=g(C)<l_{pw}$ as indicated by Remark~\ref{the_power}, and that we only consider the required data rate within $[C^{min},C^{thr})$. Therefore, we have:
\vspace{-0.2cm}
\begin{theorem}
	\label{the_power_cmp}
	When $\max\{C^{min},C^{(e,1)}\}\le C\le \min\{C^{thr},C^{(e,2)}\}$, the RRS consumes less power. Otherwise, the phased array is more power-saving.
\end{theorem}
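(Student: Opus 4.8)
The plan is to combine Remark~\ref{the_power}, which reduces the power comparison to the single inequality $M_RN_R/(M_PN_P)\le l_{pw}$, with the unimodal shape of $g(\cdot)$ established in Remark~\ref{mono}, and then intersect the resulting favorable region with the admissible data-rate interval $[C^{min},C^{thr})$.

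First I would restate, via Remark~\ref{the_power}, that the RRS is the more power-efficient choice exactly when $g(C)=M_RN_R/(M_PN_P)\le l_{pw}$, so it suffices to describe the set $\mathcal{C}=\{C:g(C)\le l_{pw}\}$ and intersect it with $[C^{min},C^{thr})$. Next I would invoke the structural facts in Remark~\ref{mono}: $g$ is strictly decreasing up to some minimizer and strictly increasing thereafter, while $g(C)>l_{pw}$ both for sufficiently small and for sufficiently large $C$. Assuming the minimum value of $g$ is at most $l_{pw}$ (the generic situation flagged in the footnote following Remark~\ref{mono}), the intermediate-value theorem applied separately on the decreasing branch and on the increasing branch, together with strict monotonicity on each, produces exactly two roots $C^{(e,1)}<C^{(e,2)}$ of $g(C)=l_{pw}$ and shows $g(C)\le l_{pw}\Longleftrightarrow C\in[C^{(e,1)},C^{(e,2)}]$; hence $\mathcal{C}=[C^{(e,1)},C^{(e,2)}]$.

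Finally I would intersect $\mathcal{C}$ with the feasible range. Since $[C^{(e,1)},C^{(e,2)}]\cap[C^{min},C^{thr})=[\max\{C^{min},C^{(e,1)}\},\min\{C^{thr},C^{(e,2)}\}]$ — understood to be empty when the left endpoint exceeds the right, in which case the RRS never wins over the admissible rates — this is precisely the interval in the statement. On it $g(C)\le l_{pw}$, so the RRS consumes less power by Remark~\ref{the_power}; for admissible $C$ outside it, $g(C)>l_{pw}$ and the phased array is more power-saving, which completes the argument.

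The main obstacle is not this final assembly, which is interval bookkeeping once the ingredients are in hand, but the lemma underneath it, Remark~\ref{mono}: the single-minimum shape of the closed-form $g(C)$ in~(\ref{close_form}) must be checked by differentiating it and verifying that the derivative changes sign exactly once on the admissible domain. Since that analysis is already absorbed into Remark~\ref{mono}, it may simply be cited here. The minor subtleties to keep track of are the half-open right endpoint (because $C^{thr}$ itself is excluded from the admissible range) and the degenerate case where $l_{pw}$ is never attained by $g$, so that $C^{(e,1)}$ and $C^{(e,2)}$ do not exist.
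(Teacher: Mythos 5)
Your proposal matches the paper's own (informal) justification: the paper derives Theorem~\ref{the_power_cmp} in the paragraph immediately preceding it by combining Remark~\ref{the_power} (RRS wins iff $g(C)=\frac{M_RN_R}{M_PN_P}\le l_{pw}$) with the unimodality and boundary behaviour of $g$ from Remark~\ref{mono} to obtain the two crossing points $C^{(e,1)}<C^{(e,2)}$, and then intersects $[C^{(e,1)},C^{(e,2)}]$ with the admissible range $[C^{min},C^{thr})$, exactly as you do. Your additional care about the degenerate case (where $g$ never dips below $l_{pw}$) and the half-open right endpoint is a slight refinement but does not change the approach.
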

\vspace{-0.2cm}
\begin{remark}
	\label{the_power_small}
	Since the maximum of $g(C)=\frac{M_RN_R}{M_PN_P}$ within $[C^{min},C^{thr})$ is achieved at $C^{min}$ or $C^{thr}$, the RRS always consumes less power than the phased array when $l_{pw}>\max\{g(C^{min}),g(C^{thr})\}$ according to Remark~\ref{the_power}.
\end{remark}
\vspace{-0.2cm}


\subsubsection{The UE is in the near field}

Without the far field assumption, the closed-form expression of $g(C)$ is not available, and thus only qualitative analysis is presented in this part.
\vspace{-0.2cm}
\begin{theorem}
	\label{the_saved_power}
	Given sufficiently large required data rate $C$, by moving the feed away from the RRS or using a feed with a smaller gain $\alpha$, more power can be saved by the RRS compared with the phased array.
\end{theorem}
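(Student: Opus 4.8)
The plan is to reduce the claim about ``saved power'' to a monotonicity statement about the number $M_RN_R$ of RRS elements, and then to invoke Lemma~\ref{rate_influence}. First I would write the saved power as $\Delta P\triangleq P_P-P_R$ (the dynamic transmit powers are set equal and cancel), substitute the models (\ref{power_RRS}) and (\ref{power_PA}), and use $P_R^{(F)}=P_P^{(F)}$ together with the definition (\ref{ratio_ele}) of $l_{pw}$; this collapses to
\begin{align}
\Delta P=\big(L_R^{(D)}P_R^{(D)}+P_R^{(V)}/Q\big)\big(l_{pw}M_PN_P-M_RN_R\big).\notag
\end{align}
The leading factor and $l_{pw}$ contain neither $r_F$ nor $\alpha$; moreover, for a fixed required rate $C$ the phased-array size $M_PN_P$ is determined entirely by the phased-array side (it is chosen so that $C_P=C$ in (\ref{rate_PA})) and is therefore unaffected by $r_F$ and $\alpha$. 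Hence $\Delta P$ is a strictly decreasing function of $M_RN_R$, and it suffices to show that the \emph{smallest} number of RRS elements for which $C_R^{lb}=C$ decreases when $r_F$ is increased or $\alpha$ is decreased.

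To this end I would note that $C_R^{lb}$ in (\ref{rate_lower_bound}) is increasing in $M_RN_R$ (the integration region $\mathcal{O}_R$ enlarges while $f_R\ge0$), so for each pair $(r_F,\alpha)$ there is a unique element count $N^\star(r_F,\alpha)$ with $C_R^{lb}\big(N^\star(r_F,\alpha);\,r_F,\alpha\big)=C$. Since $C$ is assumed sufficiently large, $N^\star(r_F,\alpha)$ is large, i.e.\ the RRS captures most of the power radiated by the feed, so the hypothesis of Lemma~\ref{rate_influence} is in force. Now fix $\alpha$ and take $r_F'>r_F$: evaluating the rate at the \emph{same} element count $N^\star(r_F,\alpha)$, Lemma~\ref{rate_influence} gives $C_R^{lb}\big(N^\star(r_F,\alpha);\,r_F',\alpha\big)>C_R^{lb}\big(N^\star(r_F,\alpha);\,r_F,\alpha\big)=C$, so by monotonicity in $M_RN_R$ the rate can be restored to exactly $C$ only by removing elements, i.e.\ $N^\star(r_F',\alpha)<N^\star(r_F,\alpha)$. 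The same comparison with $\alpha'<\alpha$ and $r_F$ fixed gives $N^\star(r_F,\alpha')<N^\star(r_F,\alpha)$.

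Combining the two steps, a larger $r_F$ or a smaller $\alpha$ lowers $M_RN_R=N^\star$, which by the displayed formula strictly increases $\Delta P$, i.e.\ more power is saved by the RRS relative to the phased array, consistent with Remark~\ref{the_power}. The step I expect to be the main obstacle is the middle one: making the ``minimal element count'' argument airtight needs (i) the strict monotonicity of $C_R^{lb}$ in $M_RN_R$, and (ii) checking that ``sufficiently large $C$'' really does force $N^\star$ into the regime covered by Lemma~\ref{rate_influence} --- and because the threshold for ``captures most feed power'' itself grows with $r_F$, one must verify this uniformly over the range of $r_F$ being compared. Everything else is bookkeeping with the power models.
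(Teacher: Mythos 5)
Your proposal is correct and follows essentially the same route as the paper: it invokes Lemma~\ref{rate_influence} to conclude that, at a fixed required rate $C$, increasing $r_F$ or decreasing $\alpha$ shrinks the RRS element count needed to achieve $C$, while the phased-array size (and hence its power) is unaffected, so the power gap widens. Your version merely makes explicit the bookkeeping the paper leaves implicit (the expression for $\Delta P$ via $l_{pw}$, the monotonicity of $C_R^{lb}$ in $M_RN_R$, and the caveat about the ``sufficiently large'' regime), which is a faithful elaboration rather than a different argument.
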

\vspace{-0.3cm}
\begin{proof}	
	Since the required data rate $C$ is given, by increasing the distance $r_F$ between the RRS and the feed or decreasing $\alpha$, {the size of the RRS under which the required data rate can be achieved becomes smaller according to Lemma~\ref{rate_influence}}. However, $r_F$ and $\alpha$ have no influence on the size of the phased array, which ends the proof.  
\end{proof}

\vspace{-0.5cm}
\section{Simulation Results}
\vspace{-0.1cm}
\begin{figure*}[!tpb]
	\centering
	\subfigure[]{
		\begin{minipage}[b]{0.25\textwidth}
			\centering
			\includegraphics[width=1\textwidth]{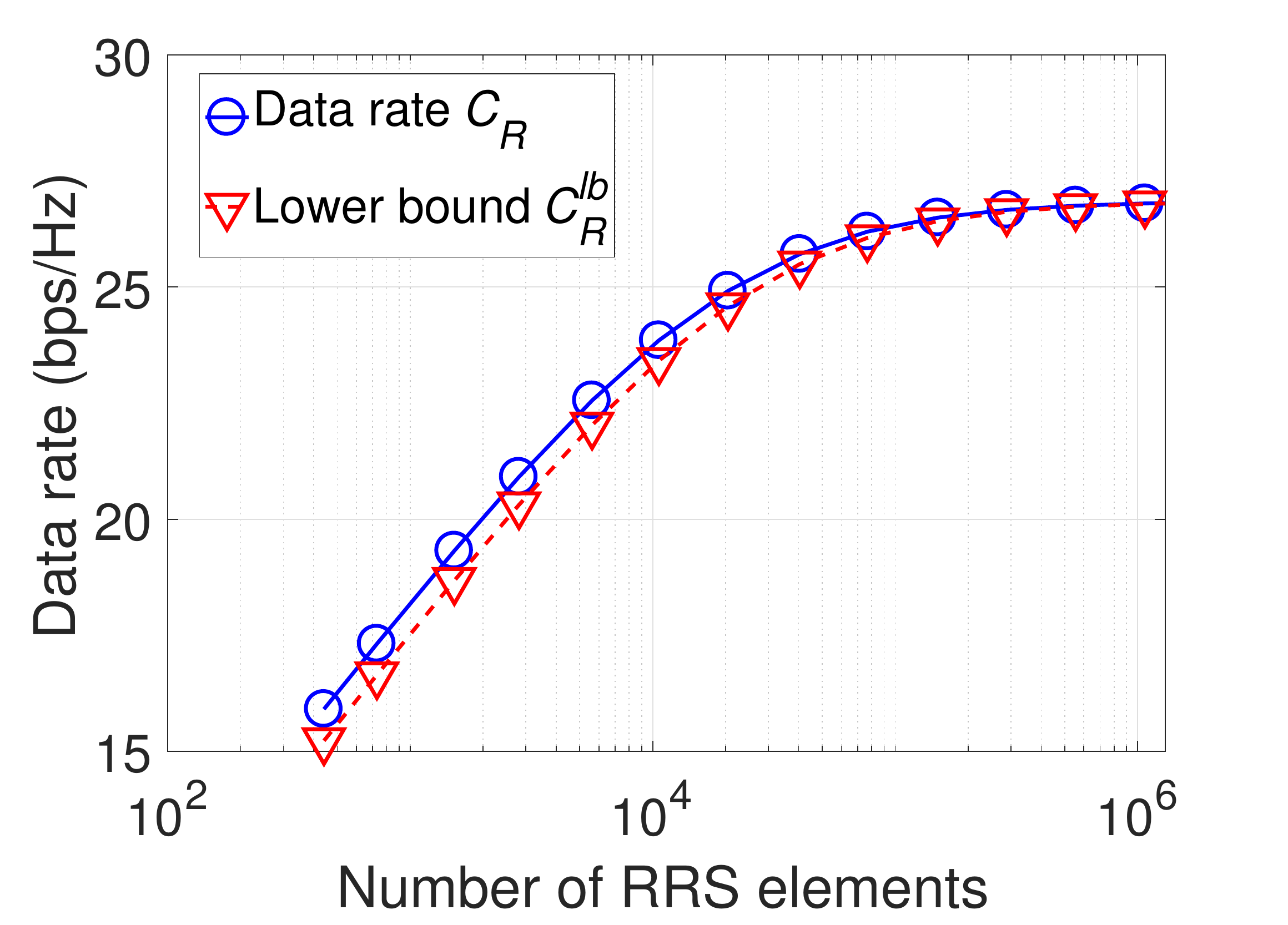}
			\vspace{-0.5cm}
			\label{lowerbound_vs_exact}
	\end{minipage}}
	\subfigure[]{
		\begin{minipage}[b]{0.245\textwidth}
			\includegraphics[width=1\textwidth]{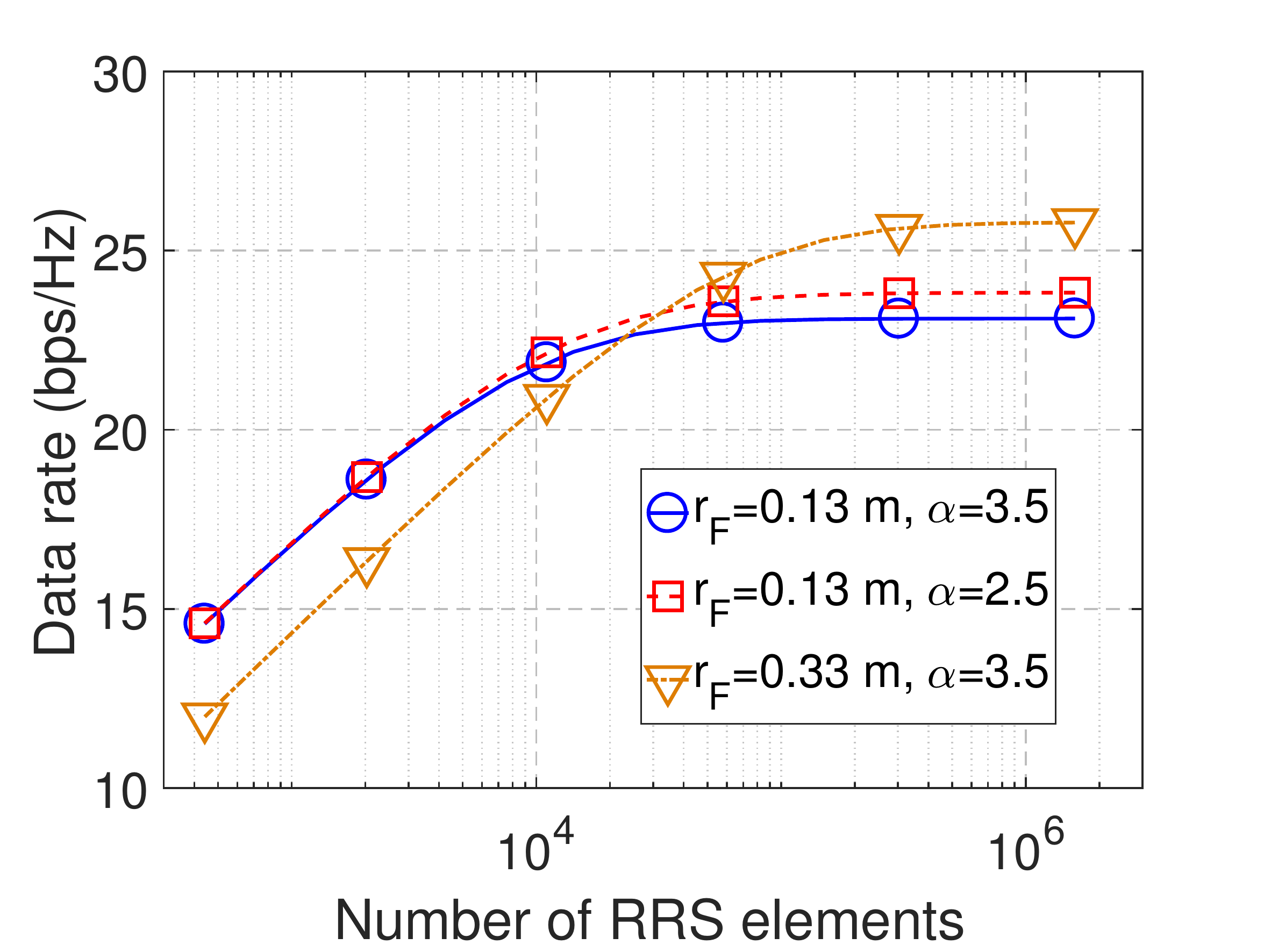}
			\vspace{-0.5cm}
			\label{a_capacity_vs_k}
	\end{minipage}}
	\subfigure[]{
		\begin{minipage}[b]{0.24\textwidth}
			\centering
			\includegraphics[width=1\textwidth]{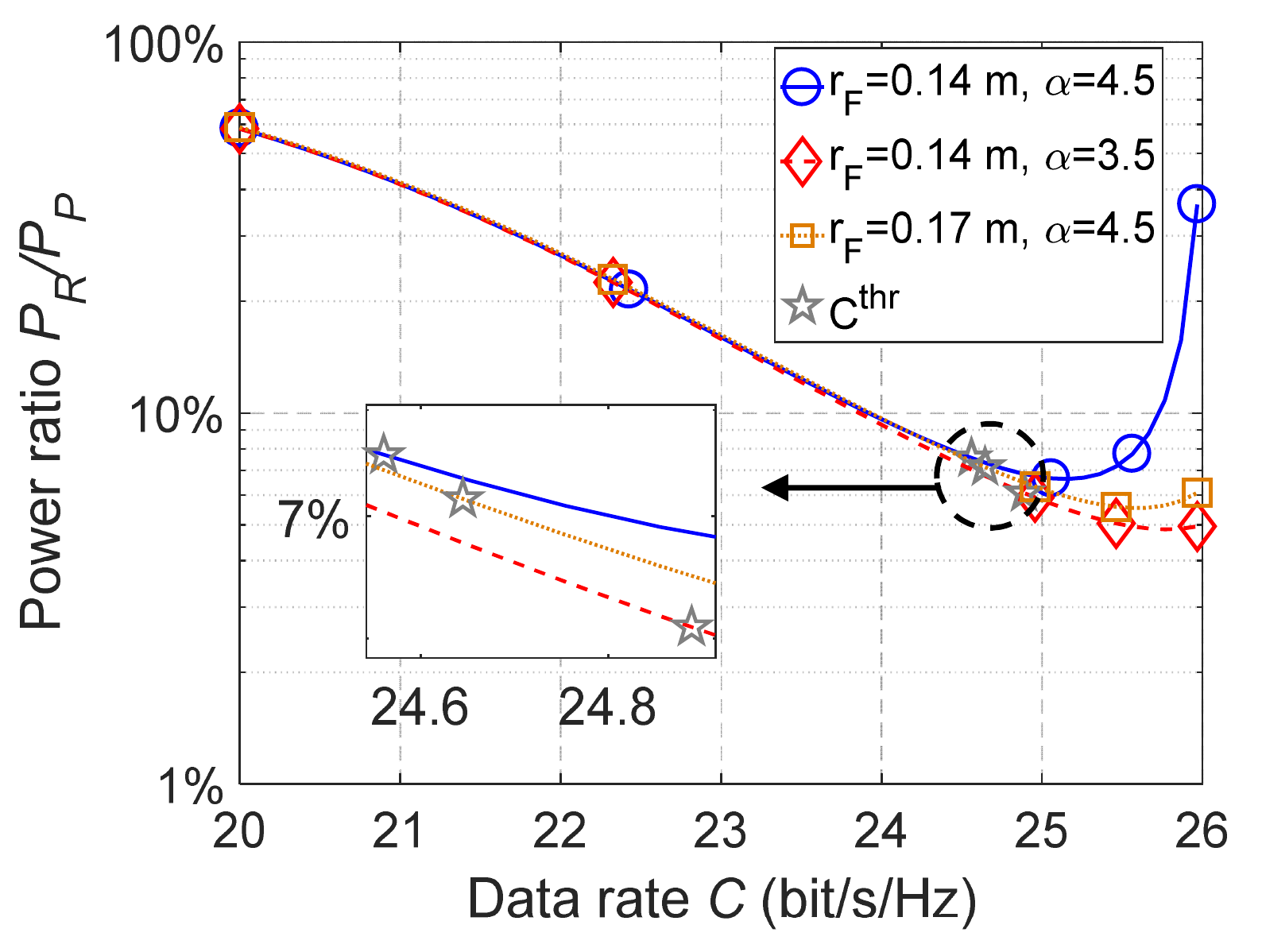}
			\vspace{-0.5cm}
			\label{energy_vs_C_UE_loc}
	\end{minipage}}
	\vspace{-0.3cm}
	\caption{{(a) Comparison between the exact data rate $C_R$ for the RRS and its lower bound $C_R^{lb}$ defined in (\ref{rate_lower_bound}), with with $r_F=0.15$~m and $\alpha=5$.} (b) {Data rate versus the number of the RRS elements.} (c) Power consumption ratio $\frac{P_R}{P_P}$ versus the required data rate $C$, where $C\le C^{thr}$ and $C>C^{thr}$ correspond to the far field case and the near field case, respectively.}
	\vspace{-0.6cm}
\end{figure*}
In this section, we verify the analytical results and evaluate how much power can be saved by the RRS through simulations. The parameters are based on existing works~\cite{SHBZL_2021}. Specifically, the center frequency is set as $26$~GHz, with a wavelength of $\lambda=1.15$~cm. The location of the UE is given by $(\theta_U,\phi_U,r_U)=(\frac{\pi}{6},\frac{\pi}{4},50~m)$. The transmit power and noise variance are $P=43$~dBm and $\sigma^2=-96$~dBm, respectively. We set $k_R=k_P=1$, i.e., $\frac{N_R}{M_R}=\frac{N_P}{M_P}=1$. The gain of the phased array antenna element is $G_E=1$ and that of the UE antenna is $G_U=1$. The size of the RRS elements and the element separation for the phased array are selected as $s_{M,R}=s_{N,R}=\frac{\lambda}{6}$, and $s_{M,P}=s_{N,P}=\frac{\lambda}{2}$, respectively. The refraction amplitude of the RRS elements is given by $A=0.8$. Each RRS element contains $L_R^{(D)}=1$ varactor diode of the power consumption $P_R^{(D)}=5\times 10^{-6}$~W. The power consumption of each phase shifter in the phased array, of one FPGA, and of one voltage converter are set as $P_P^{(S)}=0.1$~W, $P_R^{(F)}=P_P^{(F)}=5$~W, and $P_R^{(V)}=5\times 10^{-4}$~W, respectively. The maximum power constraint is $P^{max}=250$~W, and the minimum required data rate is $C^{min}=20$~bit/s/Hz.



Fig.~\ref{lowerbound_vs_exact} compares the data rate $C_R$ of the RRS against its lower bound $C_R^{lb}$. We can find that $C_R^{lb}$ is achievable when the number of the RRS elements is sufficiently large, while the gap between $C_R$ and $C_R^{lb}$ is small when the number of the RRS elements is relatively small, which verifies Remark~\ref{remark_lower_bound_achievable}. 

Fig.~\ref{a_capacity_vs_k} depicts the data rate versus the number $M_RN_R$ of the RRS elements. According to Fig.~\ref{a_capacity_vs_k}, the data rate will increase with $M_RN_R$ when $M_RN_R$ is relatively small. However, the data rate will converge when $M_RN_R$ continues to grow, {which verifies Lemma~\ref{remark_limited}}. From Fig.~\ref{a_capacity_vs_k}, we can also find that when $M_RN_R$ is sufficiently large, $r_F$ and $\alpha$ have a positive impact and a negative impact on the data rate, respectively, {which consists with Lemma~\ref{rate_influence}}. 



In Fig.~\ref{energy_vs_C_UE_loc}, we plot power consumption ratio $\frac{P_R}{P_P}$ versus the required data rate $C$. We can find that when $C<C^{thr}$, the RRS can always save energy compared with the phased array, which is consistent with Remark~\ref{the_power_small} given the fact that $l_{pw}$ is larger than $\max\{g(C^{min}),g(C^{thr})\}$. In addition, Fig.~\ref{energy_vs_C_UE_loc} shows that given sufficiently large data rate, by reducing the gain $\alpha$ of the feed or increasing the distance $r_F$ between the feed and the RRS, more power can be saved by the RRS compared with the phased array, which is consistent with Theorem~\ref{the_saved_power}. Fig.~\ref{energy_vs_C_UE_loc} also indicates that an RRS has the potential to significantly reduce power consumption compared with the phased array for any required data rate, and thus the RRS can serve as a practical enabler of HMIMO.


\vspace{-0.3cm}
\section{Conclusion}
\vspace{-0.1cm}
In this letter, we have considered a downlink network with one BS and one UE, where an RRS has been used as the BS antenna. The data rate of the RRS-aided system has been derived and analyzed. Then, we have compared the power consumption of the RRS against the phased array under the same data rate requirement. From the analytical and simulations results, we can conclude that: 1) When the RRS contains a sufficiently large number of elements, the data rate is positively correlated with the distance between the feed and the RRS, while it is negatively related with the gain of the feed. 2) We derive the range of the required data rate, by achieving which the RRS consumes less power than the phased array when the UE is in the far field of the BS antenna. 3) By designing the gain of the feed and its distance from the RRS, the RRS can significantly reduce the power consumption compared with the phased array for any required data rate, and thus the RRS is an energy-efficient way to holographic MIMO.


\begin{appendices}


\vspace{-0.5cm}
	\section{Proof of Theorem~\ref{theorem_rate_PA}}
	\vspace{-0.2cm}
	\label{app_PA_rate}
	For simplicity, we assume that the channels from the phased array elements to the UE only contains pathloss, which is described by the free space pathloss model. Therefore, the channel from the $(m,n)$-th element to the UE can be given by $h_P^{(m,n)}=\frac{\lambda\sqrt{G_EG_U}}{4\pi d^{(m,n)}}\exp(-j\frac{2\pi d^{(m,n)}}{\lambda})$~\cite{A_2005}.
	Besides, the received signal at the UE can be given by $y=\sum_{m,n}h_P^{(m,n)}\exp(j\varphi^{(m,n)}_P)\frac{1}{\sqrt{M_PN_P}}x+n$,
	where $x$ is the transmitted signal, $\varphi^{(m,n)}_P$ denotes the phase shift induced by the $(m,n)$-th phase shifter, and $n$ represents the received AWGN. Therefore, the data rate can be expressed as $C_P=\log_2(1+\frac{P\lambda^2G_EG_U}{M_PN_P(4\pi)^2\sigma^2}|\sum_{m,n}(d^{(m,n)})^{-1}\exp(j(-\frac{2\pi d^{(m,n)}}{\lambda}+\varphi^{(m,n)}_P))|^2)$.
	By aligning the phases of the received signals from different array elements, $C_P$ is maximized. Then, we can prove (\ref{rate_PA}) following the methods to prove Theorem~\ref{theorem_rate_RRS}.
\vspace{-0.5cm}
	\section{Proof of Lemma~\ref{remark_limited}}
	\vspace{-0.2cm}
	\label{app_scaling}
%
Note that $f_R(y,z)$ decreases with the antenna gain $\alpha$, and $f_R(y,z)> 0$. Thus, the data rate $C_R$ for the RRS is maximized when $\alpha=0$. Besides, based on the triangular inequality, we have $\big(\!1\!+\!\frac{r_U^2}{r_F^2}(y^2\!+\!z^2)\big)^{\frac{1}{2}}\ge |\sqrt{(\frac{r_U}{r_F})^2(y^2+z^2)}-1|$, and $\big(\!1\!-\!2\Phi_Uy\!-\!2\Omega_Uz\!+\!y^2\!+\!z^2\!\big)^{\frac{1}{2}}\ge |\sqrt{y^2+z^2}-1|$. Thus, we have 
{\setlength\abovedisplayskip{0cm}
	\setlength\belowdisplayskip{0cm}
\begin{align}
C_R\le \log_2(1+L_R\big(\iint_{\mathcal{S}_R}{f_R^{ub}(y,z)}dydz\big)^2)\triangleq C_R^{ub},
\end{align}
where $f_R^{ub}(y,z)=\!\big|\sqrt{(\frac{r_U}{r_F})^2(y^2+z^2)}-1\big|^{-3/2}\!\big|\sqrt{y^2+z^2}-1\big|^{-3/2}$. We can easily find out that $C_R^{ub}$ is limited when the number of the RRS elements is infinite.} Therefore, $C_R$ does not increase infinitely with the number of the RRS elements. 

\end{appendices}

\vspace{-0.4cm} 

\end{document}